\def\delequal{\mathrel{\ensurestackMath{\stackon[1pt]{=}{\scriptstyle\Delta}}}}
\newtheorem{thm}{Theorem}
\newtheorem{lem}{Lemma}
\xpatchcmd{\proof}{\hskip\labelsep}{\hskip5\labelsep}{}{}
\begin{document}
\title{Modeling and Analysis of Hybrid GEO-LEO Satellite Networks}

\author{Dong-Hyun Jung, Hongjae Nam, Junil Choi, and David J. Love\\
\vspace{-20pt}
\thanks{D.-H. Jung is with the Satellite Communication Research Division, Electronics and Telecommunications Research Institute, Daejeon, 34129, South Korea, and also with the Department of Information and Communication Engineering, University of Science and Technology, Daejeon, 34113, South Korea (e-mail: dhjung@etri.re.kr).}
\thanks{H. Nam and D. J. Love are with the Elmore Family School of Electrical and Computer Engineering, Purdue University, West Lafayette, IN 47907 USA (e-mail: nam86@purdue.edu; djlove@purdue.edu).}
\thanks{J. Choi is with the School of Electrical Engineering, KAIST, Daejeon, 34141, South Korea (e-mail: junil@kaist.ac.kr).}
}
\maketitle

\begin{abstract}
As the number of low Earth orbit (LEO) satellites rapidly increases, the consideration of frequency sharing or cooperation between geosynchronous Earth orbit (GEO) and LEO satellites is gaining attention. In this paper, we consider a hybrid GEO-LEO satellite network where GEO and LEO satellites are distributed according to independent Poisson point processes (PPPs) and share the same frequency resources. Based on the properties of PPPs, we first analyze satellite-visible probabilities, distance distributions, and association probabilities. Then, we derive an analytical expression for the network's coverage probability. Through Monte Carlo simulations, we verify the analytical results and demonstrate the impact of system parameters on coverage performance. The analytical results effectively estimate the coverage performance in scenarios where GEO and LEO satellites cooperate or share the same resource.

\textbf{\emph{Index terms}} --- Satellite communication systems, GEO-LEO cooperation, coverage analysis, stochastic geometry.
\end{abstract}

\IEEEpeerreviewmaketitle

\def\tcr{\textcolor{red}}
\def\tcb{\textcolor{blue}}
\def\tcg{\textcolor{green}}
\def\tco{\textcolor{orange}}

\def\E{\mathbb{E}}
\def\P{\mathbb{P}}
\def\T{\mathrm{T}}

\def\th{\mathrm{th}}
\def\max{\mathrm{max}}
\def\min{\mathrm{min}}
\def\out{\mathrm{out}}
\def\sec{\mathrm{sec}}
\def\cov{\mathrm{cov}}
\def\ml{\mathrm{ml}}
\def\sl{\mathrm{sl}}
\def\SINR{\mathsf{SINR}}
\def\SNR{\mathsf{SNR}}
\def\SIR{\mathsf{SIR}}

\def\re{r_{\mathrm{E}}}
\def\fc{f_{\mathrm{c}}}
\def\Pout{\mathcal{P}_{\out}}
\def\Pcov{\mathcal{P}_{\cov}}


\def\nI{n_{\mathrm{I}}}
\def\Npis{N_{\mathrm{pis}}}
\def\Npis{J}

\def\coeff{\omega}
\def\coeffG{\coeff^{\G}}
\def\coeffL{\coeff^{\L}}

\def\L{\mathrm{L}}
\def\G{\mathrm{G}}

\def\idxGL{\sigma} 
\def\asscidx{\tilde{\idxGL}}

\def\Loss{\bar{L}}

\def\IG{I_{\G | \asscidx}}
\def\IL{I_{\L | \asscidx}}
\def\IGG{I_{\G | \G}}
\def\IGL{I_{\G | \L}}
\def\ILG{I_{\L | \G}}
\def\ILL{I_{\L | \L}}

\def\Pt{P_{\mathrm{t}}}
\def\PtG{\Pt^{\G}}
\def\PtL{\Pt^{\L}}
\def\Pthat{\hat{P}_{\mathrm{t}}}

\def\Pr{P_{\mathrm{r}}}
\def\PrG{P_{\mathrm{r}}^{\G}}
\def\PrL{P_{\mathrm{r}}^{\L}}

\def\Gn{G_{n}}
\def\Go{G_{0}}
\def\Gohat{\hat{G}_{0}}
\def\Gtn{G_{\mathrm{t},n}}
\def\Gto{G_{\mathrm{t},0}}
\def\GoG{G_{0}^{\G}}
\def\GoL{G_{0}^{\L}}
\def\Gohats{\hat{G}_{0}^{\idxGL}}
\def\GohatG{\hat{G}_{0}^{\G}}
\def\GohatL{\hat{G}_{0}^{\L}}

\def\Gr{G_{\mathrm{r}}}

\def\BL{B^{\L}}
\def\BG{B^{\G}}
\def\Bhat{\hat{B}}

\def\alphaG{\alpha_{\G}}
\def\alphaL{\alpha_{\L}}
\def\alphahat{\hat{\alpha}}

\def\aL{a_{\L}}
\def\aG{a_{\G}}

\def\BPP{{\Phi}}  
\def\BPPL{{\Phi_{\L}}}  
\def\BPPG{{\Phi_{\G}}}  
\def\lG{\lambda_{\G}}
\def\lL{\lambda_{\L}}

\def\rminG{r_{\min}^{\G}(\phi)}
\def\rmaxG{r_{\max}^{\G}(\phi)}
\def\romaxG{r_{\mathrm{vis,max}}^{\G}}
\def\romaxL{r_{\mathrm{vis,max}}^{\L}}

\def\Pvis{\mathcal{P}_{\mathrm{vis}}}

\def\hnL{h_n^{\L}}
\def\hnG{h_n^{\G}}
\def\RoL{R_0^{\L}}
\def\RoG{R_0^{\G}}
\def\roLbiased{d_{\L}} 
\def\roGbiased{d_{\G}} 

\def\PsiL{\Lambda}
\def\PsiG{\Psi}

\def\NL{N_{\L}}
\def\RL{R^{\L}}
\def\AL{\mathcal{A}^\L} 

\def\NG{N_{\G}}
\def\RG{R^{\G}}
\def\AG{\mathcal{A}^\G} 
\def\As{\mathcal{A}^{\idxGL}} 

\def\ALrc{\AL(r)^c}
\def\Avis{\mathcal{A}_{\mathrm{vis}}}
\def\AvisG{\Avis^{\G}}
\def\AvisL{\Avis^{\L}}

\def\PcovG{\Pcov^{\G}}
\def\PcovL{\Pcov^{\L}}
\def\PcovGhat{\bar{\mathcal{P}}_{\mathrm{cov}}^{\G}}
\def\PcovLhat{\bar{\mathcal{P}}_{\mathrm{cov}}^{\L}}

\def\PasscG{\mathcal{P}_{\mathrm{assc}}^{\G}}
\def\PasscL{\mathcal{P}_{\mathrm{assc}}^{\L}}

\def\PsuccV{p_{\mathrm{vis}}}
\def\PsuccI{p_{\mathrm{int}}}

\def\AGvis{\AG_{\mathrm{vis}}}
\def\ALvis{\AL_{\mathrm{vis}}}
\def\Asvis{\As_{\mathrm{vis}}}

\def\phiinv{\phi_{\mathrm{inv}}}

\def\Pvoid{\PsuccV}




\vspace{-10pt}
\section{Introduction}\label{Sec:Intro}
\IEEEPARstart{S}{atellite}
communications can achieve global connectivity using their wide coverage capabilities. Since Release 15, the 3rd Generation Partnership Project (3GPP) has focused on integrating terrestrial networks (TNs) with non-terrestrial networks (NTNs) \cite{TR38.811,TR38.821}. By incorporating geosynchronous Earth orbit (GEO) satellites and low Earth orbit (LEO) satellites, communication services can be extended beyond the boundaries of terrestrial infrastructure. This technological advancement could enhance connectivity for aerial users, including drones, planes, and urban air mobility vehicles. To enable this, 3GPP is expected to establish a unified standard integrating TNs and NTNs starting with the 6G standard.

In 3GPP, the system-level performance of NTNs is evaluated by simulating a predefined hexagonal cell-based configuration on the UV plane, a two-dimensional coordinate system where 'U' and 'V' represent spatial coordinates with axes perpendicular to the satellite-Earth line \cite{my24APCCsub}. However, this approach requires excessive time to assess the impact of system parameters on performance.
Alternatively, the system-level performance of NTNs has recently been assessed through stochastic geometry, utilizing spatial point processes to effectively model satellite locations. Binomial point processes (BPPs) are commonly used to represent the distribution of LEO satellites, as the total number of satellites is finite~\cite{SGsatBPP20Otaki}. Additionally, the analysis based on Poisson point processes (PPP) has been validated in \cite{my22TCOM} using the Poisson limit theorem, which connects BPP- and PPP-based analyses. While many studies \cite{SGsatPPP23Park, SGsatBPP20TalgatLEO, my23VTM, SGsat24Bliss} have utilized stochastic geometry in performance evaluation for LEO satellite networks, only one recent study has focused on a GEO satellite constellation~\cite{my24TWC}. 

The link-level performance of hybrid satellite networks with GEO and LEO satellites has been analyzed \cite{wang2018novel, sharma2016line, gu2021dynamic, ryu2024rate}. A common and significant challenge in the hybrid satellite networks is \emph{in-line interference}, which occurs when a LEO satellite enters the line-of-sight (LOS) between a GEO satellite and its ground user. The earlier studies \cite{wang2018novel} and \cite{sharma2016line} introduced the adaptive power control techniques in a hybrid system comprising one GEO and one LEO satellite, which aims to maximize the system throughput by reducing the in-line interference. To address the collinear interference from multiple LEO satellites, a flexible spectrum sharing and cooperative transmission strategy were proposed in \cite{gu2021dynamic} for a hybrid system with one GEO satellite and multiple LEO satellites. In \cite{ryu2024rate}, a rate-splitting multiple access framework was applied to a hybrid GEO-LEO satellite system where a super-common message helps to mitigate the in-line interference. Although the aforementioned works \cite{wang2018novel, sharma2016line, gu2021dynamic, ryu2024rate} have competently addressed a wide range of aspects in hybrid GEO-LEO satellite systems, 
there have been few studies on the system-level performance of hybrid GEO-LEO satellite networks with large constellations.

Motivated by this, we aim to analyze downlink hybrid GEO-LEO satellite networks, where multiple GEO satellites coexist with LEO satellites and are randomly distributed along the geostationary orbit. The main contributions are described as follows.

\begin{itemize}
    \item \textbf{Unified modeling for hybrid GEO-LEO satellite networks: }
    GEO and LEO satellites exhibit distinct orbital characteristics. LEO constellations are typically designed using Walker Delta and Star patterns to ensure global coverage \cite{WalkerStar2019}. On the contrary, the positioning of GEO satellites is only restricted to the geostationary orbit along the equatorial plane, maintaining a fixed position relative to the Earth. Given the limited availability of orbital and frequency resources, cooperation between GEO and LEO satellites should be considered. In this context, we propose a unified model for hybrid GEO-LEO satellite networks that accounts for their respective orbital characteristics.
    
    \item  \textbf{Coverage analysis: } Based on the stochastic modeling, we first analyze satellite visibility and distance distributions, which vary based on the terminal's latitude. We then derive the association probabilities considering the bias factor for offloading. Finally, we establish the analytical expression of the coverage probability for hybrid GEO-LEO networks.
    
    
\end{itemize}

\section{System model}\label{sec:syst_model}
We consider a hybrid GEO-LEO satellite network 
where GEO satellites are distributed in the geostationary orbit, and the LEO satellites are located on a sphere.
Let $\mathbf{x}_n^{\idxGL}$, $\idxGL \in \{\G, \L\}$, denote the position of the $n$-th satellite for each type where the notation $\idxGL$ is used to distinguish between GEO and LEO satellites throughout the paper.
We assume that the satellite positions follow a homogeneous PPP $\BPP_{\idxGL}=\{\mathbf{x}_n^{\idxGL}\}$ with an intensity of $\lambda_{\idxGL}$.
Using spherical coordinates, the geostationary orbit is expressed as $\AG=\left\{\rho=\re+\aG, \psi=\frac{\pi}{2}, 0 \le\varphi\le 2\pi\right\}$, and the LEO sphere is $\AL=\left\{\rho=\re+\aL, 0\le\psi\le\pi, 0 \le\varphi\le 2\pi\right\}$ where $\rho$ is the radial distance, $\psi$ is the polar angle, and $\varphi$ is the azimuthal angle, $\re$ is the Earth's radius, and $a_{\idxGL}$ is the satellite altitude.
Our focus is on a typical terminal placed at arbitrary latitude $\phi$ and longitude~$\theta$ whose location is represented as
\begin{align}
\mathbf{t}=[\re \cos{\phi}\cos{\theta}, 
        \re \cos{\phi}\sin{\theta}, 
        \re \sin{\phi}]^{\T}.
        \end{align}
Since the terminal can only observe satellites above its horizontal plane, we define the \textit{visible regions} where the terminal can see GEO and LEO satellites as $\Asvis$, $\idxGL\in\{\G,\L\}$, as illustrated in Fig.~\ref{Fig:visible_region}. The regions  $\AGvis$ and $\ALvis$ are a geostationary orbit's arc and a LEO sphere's cap, respectively.
We also denote the region from which any distance to the terminal is less than~$r$ by $\mathcal{A}^{\idxGL}(r)$, $\idxGL \in \{\G,\L\}$.

For notational simplicity, the satellite index $n$ is arranged in order of distance for each type of satellites. In other words, $n=0$ indicates the satellite closest to the terminal, and as $n$ increases, it represents satellites that are progressively farther away. 
When a satellite is located at a position $\mathbf{x}_n^{\idxGL}$, the path-loss between the satellite and the terminal is given by
$\ell(\mathbf{x}_n^{\idxGL}) =  \left(\frac{c}{4\pi f_{\mathrm{c}}}\right)^2 \|\mathbf{x}_n^{\idxGL}-\mathbf{t}\|^{-\alpha}$
where 
$c$ is the speed of light, $f_{\mathrm{c}}$ is the carrier frequency, and $\alpha$ is the path-loss exponent.
Small-scale fading channels are modeled using Nakagami-$m$ fading, which accurately reflects the line-of-sight (LOS) characteristics of satellite channels \cite{my23WCL,my24TWC}. 
The cumulative distribution function (CDF) of the channel gain is given by $F_{h_n}(x)=1-e^{-m x}\sum_{q=0}^{m-1}\frac{(mx)^q}{q!}$.

We assume that the satellites utilize directional antennas to compensate for large path losses of satellite channels, and the terminal has an omnidirectional antenna. 
The effective antenna gain of a satellite channel is expressed as $G_n^\idxGL = \Gtn^\idxGL \Gr$
where $\Gtn^\idxGL$ is the transmit antenna gain of the satellite and $\Gr$ is the receive antenna gain of the terminal. 
For analytical tractability, it is assumed that the serving satellites directs its mainlobe toward the target terminals, while the beam of other satellites are relatively misaligned \cite{SGsatBPP20Otaki,SGsatPPP23Park,my24TCOMsub}.

\begin{figure}
\centering
\subfigure[visible arc of the geostationary orbit]{
\includegraphics[width=.65\columnwidth]{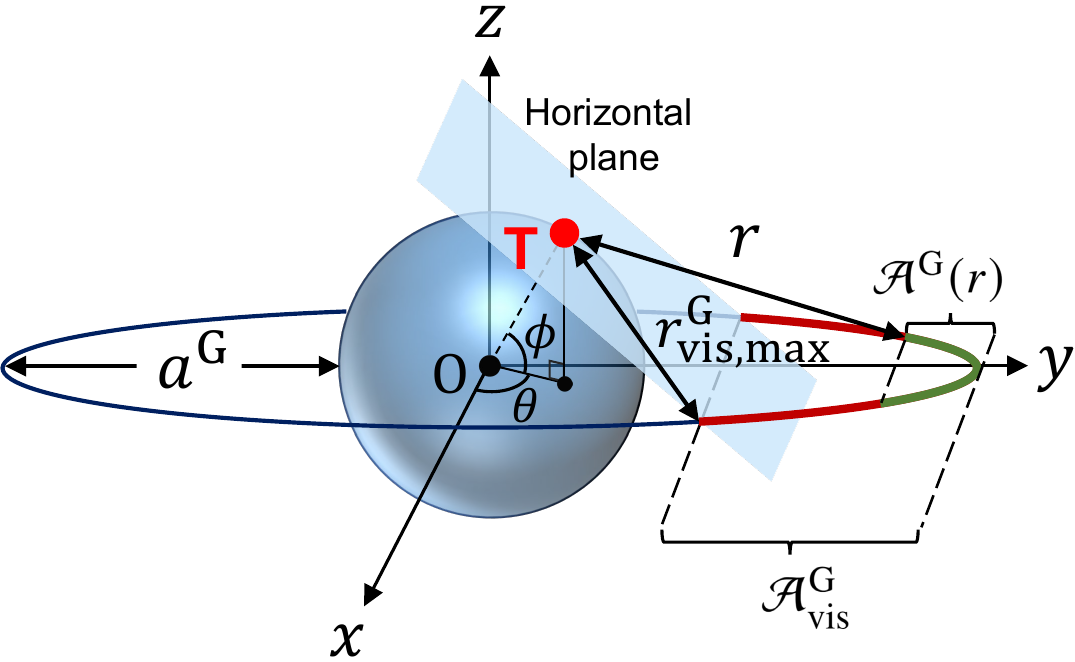}
\label{Fig:AGvis}
}
\subfigure[visible spherical cap of the LEO sphere]{
\includegraphics[width=.55\columnwidth]{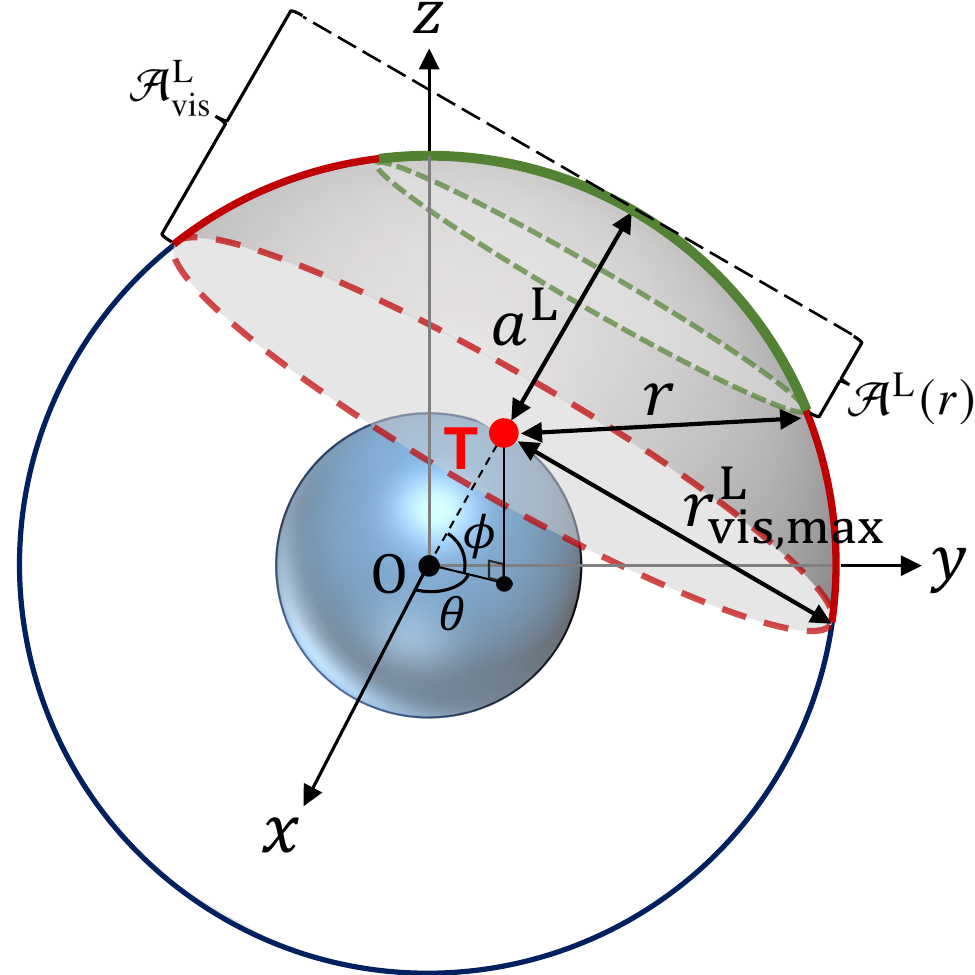}
\label{Fig:ALvis}
}
\caption{Satellite visible regions, i.e., $\Asvis$, $\idxGL\in\{\G,\L\}$.}
\label{Fig:visible_region}
\end{figure}


We adopt a biased-received-power (BRP)-based association policy where the typical terminal is associated to the satellite with the largest averaged BRP \cite{SG12HetNetJo}. 
Let $P_{\mathrm{r}}^{\idxGL}$ denote a long-term averaged BRP from the nearest GEO or LEO satellite. Then, we can express this as $P_{\mathrm{r}}^{\idxGL} = \Pt^{\idxGL} \Go^\idxGL \ell(\mathbf{x}_{0}^{\idxGL}) B^{\idxGL}$
where $\Pt^{\idxGL}$ is the transmit power and $B^{\idxGL}$ is a bias factor for load balancing between GEO and LEO satellites. For example, if $\BG \gg \BL$, the probability that the terminal is served by a GEO satellite increases, and vice versa. With the BRP-based association, the serving satellite is selected as $\asscidx = \text{argmax}_{\idxGL\in\{\G,\L\}} P_{\mathrm{r}}^{\idxGL}.$
After that, the received SINR at the terminal is given by
\begin{align}
\SINR_{\asscidx}=
        \frac{\Pt^{\asscidx} \Go^{\asscidx} h_0^{\asscidx} \ell(\mathbf{x}_{0}^{\asscidx})}{I_{\G | \asscidx} + I_{\L | \asscidx} + N_0 W}
\end{align}
where $I_{\G | \asscidx}$ and $I_{\L | \asscidx}$ are the aggregated interference from the GEO and LEO satellites given that the terminal is associated with a type of satellite $\asscidx \in \{\G,\L\}$, respectively.

\section{Mathematical preliminaries}

It is important to note that the terminal's latitude determines the visibility of GEO satellites, while the visibility of LEO satellites remains uniform in all directions. For example, when the terminal is positioned at the equator, i.e., $\phi=0$, the visible arc $\AGvis$, depicted as the red curve in Fig. \ref{Fig:AGvis}, reaches its maximum length. As the terminal moves farther from the equator and its latitude increases, the visible arc shortens and eventually vanishes when the terminal latitude exceeds $\phiinv = \cos^{-1}\left(\frac{\re}{\re + \aG}\right) \approx 81.3 \text{ degrees.}$ Based on this fact, the satellite-visible probabilities are given in the following lemma.

\begin{lem}
The probability that the terminal can see at least one GEO satellite is given by
\begin{align}
\Pvis^{\G} 
    =\begin{cases} 
            1-e^{2 \lG (\re+\aG)\cos^{-1}\left(\frac{\re}{(\re+\aG)\cos\phi}\right)}, & \mbox{if  } |\phi|<\phiinv,\\
            0, & \mbox{otherwise},
        \end{cases}
\end{align}
and the probability that the terminal can see at least one LEO satellite is  
\begin{align}
\Pvis^{\L}
    = 1-e^{-2 \pi \lL (\re+\aL)\aL}.
\end{align}
\end{lem}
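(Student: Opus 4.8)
The plan is to compute each visibility probability as the void probability of a homogeneous PPP restricted to the appropriate visible region, so the main work is (i) identifying the visible region geometrically and (ii) computing its measure (arc length for GEO, spherical-cap area for LEO). Since $\BPP_{\idxGL}$ is a homogeneous PPP with intensity $\lambda_{\idxGL}$ on $\As$, the number of satellites of type $\idxGL$ falling in the visible region $\Asvis$ is Poisson with mean $\lambda_{\idxGL} \cdot |\Asvis|$, where $|\cdot|$ denotes the relevant geometric measure (length on the orbit arc, area on the sphere). Hence $\Pvis^{\idxGL} = 1 - \P(\BPP_{\idxGL}(\Asvis) = 0) = 1 - e^{-\lambda_{\idxGL} |\Asvis|}$, and everything reduces to evaluating $|\AGvis|$ and $|\ALvis|$.

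For the LEO case, the visible region $\ALvis$ is the spherical cap of the sphere of radius $\re + \aL$ consisting of points above the terminal's horizontal plane (the plane through $\mathbf{t}$ perpendicular to $\mathbf{t}$). By rotational symmetry this cap is the same regardless of the terminal's latitude. I would parametrize the cap by the central half-angle $\gamma$ subtended at the Earth's center, determined by the tangency condition that a point on the sphere is visible iff the angle between its position vector and $\mathbf{t}$ is at most $\gamma$, where $\cos\gamma = \re/(\re+\aL)$. The area of a spherical cap of a sphere of radius $R$ with central half-angle $\gamma$ is $2\pi R^2(1-\cos\gamma)$, so with $R = \re+\aL$ and $1 - \cos\gamma = 1 - \re/(\re+\aL) = \aL/(\re+\aL)$ we get $|\ALvis| = 2\pi(\re+\aL)^2 \cdot \aL/(\re+\aL) = 2\pi(\re+\aL)\aL$, giving $\Pvis^{\L} = 1 - e^{-2\pi\lL(\re+\aL)\aL}$.

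For the GEO case the geometry is one-dimensional: $\AG$ is a circle of radius $\re + \aG$ in the equatorial plane, and $\AGvis$ is the sub-arc of longitudes for which the satellite lies above the terminal's horizontal plane. The condition $\langle \mathbf{x} - \mathbf{t}, \mathbf{t}\rangle \ge 0$ for $\mathbf{x}$ on the equatorial circle, after writing $\mathbf{x} = (\re+\aG)(\cos\varphi, \sin\varphi, 0)$ and using $\mathbf{t}$ as above, reduces to a bound on $\cos(\varphi - \theta)$; carrying the algebra through (using that the terminal's height above the equatorial plane enters through $\cos\phi$) yields that the arc spans a total angular width of $2\cos^{-1}\!\big(\tfrac{\re}{(\re+\aG)\cos\phi}\big)$ when the argument is in $[-1,1]$, i.e. when $|\phi| < \phiinv = \cos^{-1}(\re/(\re+\aG))$, and is empty otherwise. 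Multiplying the angular width by the radius $\re + \aG$ gives $|\AGvis| = 2(\re+\aG)\cos^{-1}\!\big(\tfrac{\re}{(\re+\aG)\cos\phi}\big)$ for $|\phi| < \phiinv$ and $0$ otherwise, and substituting into $1 - e^{-\lG|\AGvis|}$ gives the claimed expression (matching the sign in the statement, noting the exponent there is written without the minus sign on the coefficient).

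I expect the main obstacle to be the GEO arc-length computation: unlike the LEO cap, there is no rotational symmetry that collapses the dependence on $\phi$, so one must carefully set up the half-space visibility inequality for a point constrained to the equatorial circle, reduce it to an inequality on $\cos(\varphi-\theta)$, and correctly extract both the arc width and the latitude threshold $\phiinv$ at which the visible arc degenerates to a point. Verifying the edge behavior (that at $\phi = 0$ the width is $2\cos^{-1}(\re/(\re+\aG))$, maximal, and that it shrinks to $0$ as $|\phi| \uparrow \phiinv$) is the natural consistency check.
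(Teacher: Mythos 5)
Your proposal is correct and follows exactly the paper's approach: the paper's proof is a one-line appeal to the void probability of the PPP over the visible region $\Asvis$ (with the geometric details delegated to cited references), and you supply precisely those details --- the spherical-cap area $2\pi(\re+\aL)\aL$ for LEO and the visible-arc length $2(\re+\aG)\cos^{-1}\bigl(\tfrac{\re}{(\re+\aG)\cos\phi}\bigr)$ for GEO --- correctly. You also rightly note that the exponent in the stated $\Pvis^{\G}$ (and in the paper's void-probability expression $e^{\lambda_{\idxGL}\Asvis}$) is missing a minus sign, which is a typo in the paper rather than an issue with your derivation.
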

The satellite-visible probabilities can be derived using the void probability of the PPP $\BPP_{\idxGL}$, i.e., the probability that no point exists in the visible region $\Asvis$, which is given by $e^{\lambda_{\idxGL}\Asvis}$. The details of deriving $\Pvis^{\L}$ and $\Pvis^{\G}$ are given in \cite{my22TCOM} and \cite{my24TWC}, respectively.

Let $R_0^{\idxGL}$ denote the distance
between the terminal and the closest visible GEO ($\idxGL=\G$) or LEO ($\idxGL=\L$) satellite.
When $|\phi|<\phiinv$, the minimum and the maximum distances to a visible GEO satellite are given by $\rminG=\sqrt{(\re+\aG-\re\cos\phi)^2 + \re^2 \sin^2\phi}$ 
and $\romaxG=\sqrt{\aG^2 + 2 \aG \re}$, respectively \cite{my24TWC}.
The minimum and maximum distances to a visible LEO satellite are given by $\aL$ and $\romaxL=\sqrt{\aL^2 + 2 \aL \re}$, respectively. Using these distances, we obtain the CDFs of $\RoG$ and $\RoL$ in the following lemma.
\begin{lem}\label{lem:CDFPDF-RoG}
The CDFs of $\RoG$ and $\RoL$ are respectively given by \cite{my22TCOM,my24TWC}
\begin{align}\label{eq:CDFRoG}
    F_{\RoG}(r) &= \frac{ 1-\Psi(r,\phi)}{ 1-\Psi(\romaxG,\phi)}, \,\,\, \mathrm{for}\,\,\, \rminG \le r < \romaxG,\\
    F_{\RoL}(r) &=
        \frac{ 1-\PsiL(r)}{ 1-\PsiL(\romaxL)}, \,\,\, \mathrm{for  }\,\,\, \aL \le r < \romaxL
\end{align}
where $\Psi(r,\phi)
        =\exp\left(-2 \lG (\re+\aG)\cos^{-1}\left(\frac{(\re+\aG)^2+\re^2-r^2}{2(\re+\aG)\re\cos\phi}\right)\right) $ and $\PsiL(r) = \exp\left({-\frac{\pi \lL (\re+\aL) (r^2-\aL^2)}{\re}}\right)$.
\end{lem}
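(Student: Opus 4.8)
The plan is to derive the CDF of $\RoG$ by conditioning on the event that at least one visible GEO satellite exists, since $\RoG$ is only well-defined on this event. Writing $F_{\RoG}(r) = \P(\RoG \le r \mid \AGvis \text{ nonempty})$, I would express the numerator as the probability that at least one point of $\BPPG$ falls in the region $\AGvis \cap \AG(r)$ — the portion of the visible geostationary arc within distance $r$ of the terminal — and the denominator as $\Pvis^{\G} = \P(\BPPG \cap \AGvis \ne \emptyset)$, which is already given in Lemma 1. By the void probability of a homogeneous PPP on the one-dimensional arc $\AG$ (which has intensity $\lG$ per unit arc length), $\P(\BPPG \cap \mathcal{S} = \emptyset) = e^{-\lG |\mathcal{S}|}$ for any measurable sub-arc $\mathcal{S}$, so $F_{\RoG}(r) = \frac{1 - e^{-\lG |\AGvis \cap \AG(r)|}}{1 - e^{-\lG |\AGvis|}}$. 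The same argument on the LEO sphere, with the two-dimensional spherical measure of the cap, gives $F_{\RoL}(r)$.

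The substantive step is the geometry: computing the arc length $|\AGvis \cap \AG(r)|$ as a function of $r$ and $\phi$, and the spherical-cap area $|\ALvis \cap \AL(r)|$ as a function of $r$. For the LEO case this is the standard cap-area computation: a sphere of radius $\re+\aL$, and the locus of points at distance exactly $r$ from the terminal cuts out a cap whose area, via the law of cosines relating $r$, $\re$, $\re+\aL$ and the central angle, works out to $\frac{\pi (\re+\aL)(r^2 - \aL^2)}{\re}$, so that $e^{-\lG\,(\cdot)}$ in the LEO normalization is exactly $\PsiL(r)$ — hence $\romaxL = \sqrt{\aL^2 + 2\aL\re}$ is the value of $r$ at which the cap fills the entire visible region. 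For the GEO case, I would parametrize the geostationary arc by its azimuthal angle $\varphi$ and use the law of cosines in the plane through the Earth's center, the satellite, and the terminal: the distance from $\mathbf{t}$ to a GEO satellite at azimuth offset $\Delta\varphi$ satisfies $r^2 = (\re+\aG)^2 + \re^2 - 2(\re+\aG)\re\cos\phi\cos(\Delta\varphi)$ (using that the terminal's projection onto the equatorial plane has radius $\re\cos\phi$), so the half-arc subtending all GEO satellites within distance $r$ has angular width $\cos^{-1}\!\big(\frac{(\re+\aG)^2 + \re^2 - r^2}{2(\re+\aG)\re\cos\phi}\big)$, giving arc length $2(\re+\aG)$ times this — which is precisely why $\Psi(r,\phi) = e^{-\lG |\AGvis\cap\AG(r)|}$ takes the stated form, and why the lower limit is $\rminG$ (the $\Delta\varphi = 0$ distance) and the upper limit $\romaxG = \sqrt{\aG^2 + 2\aG\re}$ (the horizon distance).

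The main obstacle is handling the $\cos\phi$ factor correctly in the GEO arc-length computation: the geostationary orbit lies in the equatorial plane while the terminal sits at latitude $\phi$, so one must carefully project and verify that the relevant quantity is the chord distance in three dimensions, confirming that the expression under $\cos^{-1}$ stays in $[-1,1]$ precisely on the interval $[\rminG, \romaxG)$ and that this interval is nonempty exactly when $|\phi| < \phiinv$. Once the two region-measure formulas are in hand, substituting them into the conditional-CDF expression and identifying the results with $\Psi$ and $\PsiL$ is immediate, and the normalization by $1 - \Psi(\romaxG,\phi)$ and $1 - \PsiL(\romaxL)$ matches the denominators in Lemma 1, completing the proof.
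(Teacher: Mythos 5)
Your proposal is correct and follows essentially the same route as the derivation the paper relies on (the lemma is stated with citations rather than proved in-line): conditioning on visibility, applying the PPP void probability to $\Asvis\cap\As(r)$, and computing the arc length and cap area via the law of cosines. Your region-measure formulas are consistent with the quantities the paper uses elsewhere, e.g.\ the derivatives $\frac{d|\AG(r)|}{dr}$ and $\frac{d|\AL(r)|}{dr}$ in Appendix~A and the normalizations $1-\Psi(\romaxG,\phi)$, $1-\PsiL(\romaxL)$ matching the visible probabilities of Lemma~1.
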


According to the satellite visibility and the terminal's latitude, we can classify possible association cases as in Table~\ref{table:assc_case}. 
When only GEO satellites are visible, the terminal is certainly served by a GEO satellite, and vice versa. When both GEO and LEO satellites are visible, the terminal is associated with a satellite using the BRP-based association policy. 
By using the notation $\hat{x}^{\idxGL}$, $x\in\{\Pt,\Go, B, \alpha\}$, $\idxGL \in\{\G,\L\}$, we denote the ratio between system parameters for the two types of satellites as  $\hat{x}^{\G} \delequal x^{\G}/x^{\L} \delequal 1/\hat{x}^{\L}$. 
Then, the association probabilities are given in the following lemma.

\begin{table}[t]
\centering
\caption{Association cases based on the satellite visibility and the terminal's latitude}
\label{table:assc_case}
\begin{tabular}{|c|c|c|c|}
\hline
Latitude & GEO visibility & LEO visibility & Association \\ \hline
\multirow{4}{*}{\(|\phi| \le \phi_\text{inv}\)} 
 & Yes & Yes & GEO or LEO \\ \cline{2-4}
 & Yes & No & GEO \\ \cline{2-4}
 & No & Yes & LEO \\ \cline{2-4}
 & No & No & - \\ \hline
\multirow{2}{*}{\(|\phi| > \phi_\text{inv}\)}
 & No & Yes & LEO \\ \cline{2-4}
 & No & No & - \\ \hline
\end{tabular}
\end{table}

\begin{lem}\label{lem:Passc}
    When the terminal at a latitude less than $\phiinv$ sees at least one GEO and one LEO satellite, the probabilities that the terminal is associated with a serving GEO or LEO satellite are respectively given by
\begin{align}\label{eq:PasscG}
\PasscG&
= 1-\int_{r_{\mathrm{min}}^{\G}(\phi)}^{\romaxG} F_{\RoL} (\roLbiased(r)) f_{\RoG}(r) dr,\\
\PasscL&
= 1-\int_{\aL}^{\romaxL} F_{\RoG} (\roGbiased(r))  f_{\RoL}(r) dr,
\end{align}
where 
$d^{\idxGL}(r) \delequal (\Pthat^{\idxGL} \Gohat^{\idxGL} \Bhat^{\idxGL} )^{1/\alpha^{\idxGL}} r^{1/\alphahat_{\idxGL}}$, $\idxGL\in\{\G,\L\}$, is the biased minimum distance to the nearest satellite.
\end{lem}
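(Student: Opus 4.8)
The plan is to reduce the biased-received-power (BRP) comparison to a single inequality between the two nearest-satellite distances $\RoG$ and $\RoL$, and then to average over these distances using the visibility-conditional laws of Lemma~\ref{lem:CDFPDF-RoG} together with the independence of the GEO and LEO point processes. Since $\ell(\mathbf{x})$ is strictly decreasing in $\|\mathbf{x}-\mathbf{t}\|$, the largest averaged BRP within each type is attained by the satellite closest to the terminal, so $\PrG$ and $\PrL$ (built from the nearest satellite of each type) are indeed the quantities being compared, and the terminal is served by a GEO satellite exactly when $\PrG>\PrL$. Writing this out and cancelling the common factor $(c/4\pi\fc)^2$ gives
\begin{align*}
\PtG \GoG \BG (\RoG)^{-\alphaG} > \PtL \GoL \BL (\RoL)^{-\alphaL},
\end{align*}
which, upon solving for $\RoL$, is equivalent to $\RoL > \roLbiased(\RoG)$ with
\begin{align*}
\roLbiased(r) = \left(\frac{\PtL \GoL \BL}{\PtG \GoG \BG}\right)^{1/\alphaL} r^{\alphaG/\alphaL} = \bigl(\Pthat^{\L}\Gohat^{\L}\Bhat^{\L}\bigr)^{1/\alphaL} r^{1/\alphahat_{\L}},
\end{align*}
the last equality using $\hat{x}^{\L}=x^{\L}/x^{\G}$ and $\alphahat_{\L}=\alphaL/\alphaG$; this is precisely $d^{\L}(r)$, the distance at which a LEO satellite would tie the BRP of a GEO satellite at distance $r$.

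Next I would use independence: because $\BPPG$ and $\BPPL$ are independent PPPs, conditioned on the event in the statement (at least one visible satellite of each type) the distances $\RoG$ and $\RoL$ are independent with the conditional CDFs $F_{\RoG}$ and $F_{\RoL}$ of Lemma~\ref{lem:CDFPDF-RoG}, which are themselves the laws conditioned on visibility (their bounded supports $[\rminG,\romaxG)$ and $[\aL,\romaxL)$ reflecting this). Conditioning on $\RoG=r$ and applying the tower property,
\begin{align*}
\PasscG = \int_{\rminG}^{\romaxG} \P\!\left(\RoL > \roLbiased(r)\right) f_{\RoG}(r)\,dr = \int_{\rminG}^{\romaxG} \bigl(1 - F_{\RoL}(\roLbiased(r))\bigr) f_{\RoG}(r)\,dr,
\end{align*}
and since $f_{\RoG}$ integrates to one over its support this is the claimed expression. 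The LEO formula follows by the symmetric argument: the terminal is served by LEO iff $\PrL\ge\PrG$ (ties occurring with probability zero as the distances are continuous), equivalently $\RoG>\roGbiased(\RoL)$ with $\roGbiased(r)=(\Pthat^{\G}\Gohat^{\G}\Bhat^{\G})^{1/\alphaG} r^{1/\alphahat_{\G}}$, and conditioning on $\RoL=r$ gives $\PasscL = 1 - \int_{\aL}^{\romaxL} F_{\RoG}(\roGbiased(r)) f_{\RoL}(r)\,dr$.

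No step is a genuine obstacle. The main point deserving care is that conditioning on joint visibility preserves both the independence of $\RoG$ and $\RoL$ and their individual conditional laws — which holds because the two processes are independent, GEO visibility being determined by $\BPPG$ alone and the pair (LEO visibility, $\RoL$) by $\BPPL$ alone. A minor second point is the range of the biased distances: whenever $\roLbiased(r)$ or $\roGbiased(r)$ falls below the minimum or above the maximum admissible distance of the other type, the corresponding integrand is automatically $0$ or $1$ since $F_{\RoL}$ and $F_{\RoG}$ are proper CDFs, so the single integral already captures every row of Table~\ref{table:assc_case} without an explicit case split.
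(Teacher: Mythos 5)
Your proposal is correct and follows essentially the same route as the paper's proof: writing $\PasscG=\P[\PrG>\PrL]$, reducing the BRP inequality to $\RoL>d_{\L}(\RoG)$, and then conditioning on $\RoG=r$ and integrating against $f_{\RoG}$ using the CDF of $\RoL$. Your added remarks on why conditioning on joint visibility preserves independence and on the automatic handling of out-of-range biased distances are sound elaborations of steps the paper leaves implicit.
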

\begin{proof}
    The probability that the terminal is associated with a GEO satellite is derived as
    \begin{align}
    \PasscG 
    &= \P\left[\PrG > \PrL \right]\nonumber\\
    &= \P\left[\Pt^{\G} \Go^\G \ell(\mathbf{x}_{0}^{\G}) B^{\G} > \Pt^{\L} \Go^\L \ell(\mathbf{x}_{0}^{\L}) B^{\L} \right]\nonumber\\
    &= \!\int_{r_{\mathrm{min}}^{\G}(\phi)}^{\romaxG} \! \P\left[\RoL > (\Pthat^{\L} \Gohat^{\L} \Bhat^{\L} )^{1/\alphaL} r^{1/\alphahat_{\L}}\right] f_{\RoG}(r) dr 
    \end{align}
    Herein, $f_{\RoG}(r)$ is the probability density function of $\RoG$, which can be readily obtained by differentiating the CDF \eqref{eq:CDFRoG}.    
    Using the CDF of $\RoL$ and the definition of $d^{\idxGL}(r)$, we can obtain \eqref{eq:PasscG}.
    The association probability $\PasscL$ can be similarly obtained but omitted due to space limitation.
\end{proof}

\begin{figure*}
\begin{center}
\includegraphics[width=1.9\columnwidth]{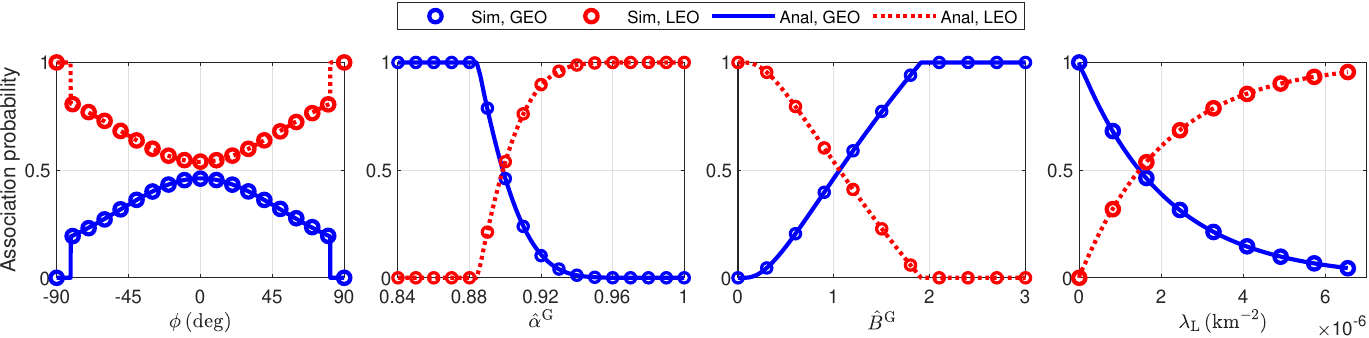}
\end{center}
\setlength\abovecaptionskip{.25ex plus .125ex minus .125ex}
\setlength\belowcaptionskip{.25ex plus .125ex minus .125ex}
\vspace{-8pt}
\caption{Association probabilities. Unless otherwise stated, we set $\{\alphaG,\alphaL\}=\{3.6, 4\}$, $\Pthat^{\G} \GohatG=50$ dB, and $m=3$.}
\label{Fig:Passc}
\end{figure*}

\begin{figure}
\begin{center}
\includegraphics[width=.9\columnwidth]{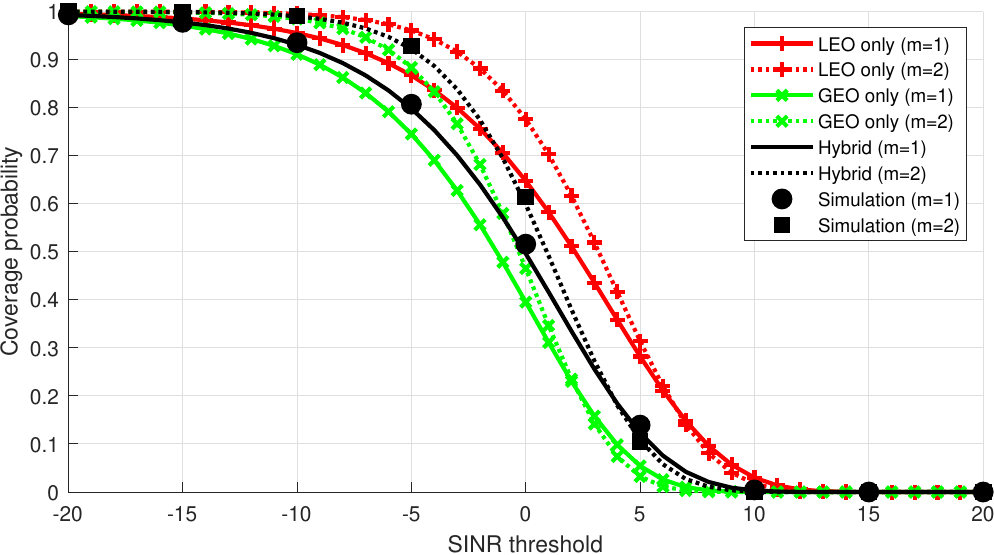}
\end{center}
\setlength\abovecaptionskip{.25ex plus .125ex minus .125ex}
\setlength\belowcaptionskip{.25ex plus .125ex minus .125ex}
\caption{Coverage probability versus SINR threshold $\tau$ with various Nakagami fading parameters $m=\{1,2\}$ where $\{\alphaG,\alphaL\}=\{2.7, 3\}$.}
\label{Fig:Pcov_vs_tau}
\end{figure}


\section{Coverage analysis}\label{sec:Pcov}
Driving the Laplace transform of aggregated interference is an important intermediate step for stochastic geometry-based coverage analysis. Thus, we first obtain the Laplace transform of the aggregated interference $I_{\idxGL}$, $\idxGL \in \{\G,\L\}$, given $R_0^{\idxGL}=r_0$ in the following lemma.

\begin{lem}\label{lem:LT}
    When the effective antenna gains of the interfering satellites are uniform, i.e., $\Gn^{\idxGL}=\bar{G}^{\idxGL}\ \forall n \neq 0$, $\idxGL\in\{\G,\L\}$, for some constant $\bar{G}^{\idxGL}$, the Laplace transforms of the total interference power from GEO and LEO satellites are derived as follows: 
    \begin{align}\label{eq:LTG}
    {\mathcal{L}}_{I_{\G | \idxGL}}(s;r_0) 
        &=e^{- 4 \lG (\re+\aG) \Omega(s;r_0)},\\
        \mathcal{L}_{I_{\L | \idxGL}}(s;r_0) 
         &=e^ {- 2 \pi \lL \frac{\re+\aL}{\re}\kappa(s; r_0)}
    \end{align}  
    where 
    $\Omega(s;r_0) = \int_{r_0}^{\romaxG} \left(1-\left(\frac{m \coeffG r^{\alpha}}{s + m \coeffG r^{\alpha}}\right)^{m}\right) \frac{r dr}{\sqrt{v_1-\left(v_2-r^2\right)^2}}$
    and  $\kappa(s; r_0)= \int_{r_0}^{\romaxL} \left( 1 - \left( \frac{m \coeffL r^{\alphaL}}{s+m \coeffL r^{\alphaL}} \right)^m \right)  r dr$ with $v_1=4(\re+\aG)^2\re^2\cos^2\phi$, $v_2=(\re+\aG)^2+\re^2$, and $\coeff^{\idxGL} = \frac{16 \pi^2\fc^2}{\Pt^{\idxGL} \bar{G}^{\idxGL} c^2}$.
\end{lem}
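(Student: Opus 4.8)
The plan is to express each aggregate interference term as a shot-noise functional of the interfering-satellite point process and then invoke the probability generating functional (PGFL) of a homogeneous PPP. Fix $\idxGL\in\{\G,\L\}$ and condition on $R_0^{\idxGL}=r_0$. Viewing the closest type-$\idxGL$ satellite (at distance $r_0$) as the serving one and hence not an interferer, the independence of a PPP over disjoint regions shows that the remaining type-$\idxGL$ satellites form a homogeneous PPP of intensity $\lambda_{\idxGL}$ on the portion of the visible region $\Asvis$ at distance greater than $r_0$ from $\mathbf{t}$. Under the uniform-gain assumption, $I_{\idxGL|\asscidx}=\sum_{n:\,\|\mathbf{x}_n^{\idxGL}-\mathbf{t}\|>r_0}\Pt^{\idxGL}\bar{G}^{\idxGL} h_n^{\idxGL}\,\ell(\mathbf{x}_n^{\idxGL})$, and since the fading gains are i.i.d.\ and independent of $\BPP_{\idxGL}$, the PGFL gives
\[
\mathcal{L}_{I_{\idxGL|\asscidx}}(s;r_0)
=\exp\!\left(-\lambda_{\idxGL}\int_{\Asvis\cap\{\|\mathbf{x}-\mathbf{t}\|>r_0\}}\Big(1-\E_{h}\big[e^{-s\Pt^{\idxGL}\bar{G}^{\idxGL} h\,\ell(\mathbf{x})}\big]\Big)\,d\mathbf{x}\right).
\]

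Next I would evaluate the fading expectation. The stated channel-gain CDF $F_{h_n}$ is that of a unit-mean $\Gamma(m,1/m)$ random variable, so $\E_{h}[e^{-th}]=\big(m/(m+t)\big)^{m}$. Substituting the path-loss law $\ell(\mathbf{x})=(c/4\pi\fc)^{2}\|\mathbf{x}-\mathbf{t}\|^{-\alpha_{\idxGL}}$ and the definition $\coeff^{\idxGL}=16\pi^{2}\fc^{2}/(\Pt^{\idxGL}\bar{G}^{\idxGL}c^{2})$ yields $s\Pt^{\idxGL}\bar{G}^{\idxGL}\ell(\mathbf{x})=s/(\coeff^{\idxGL}r^{\alpha_{\idxGL}})$ with $r=\|\mathbf{x}-\mathbf{t}\|$, so the integrand equals $1-\big(m\coeff^{\idxGL}r^{\alpha_{\idxGL}}/(s+m\coeff^{\idxGL}r^{\alpha_{\idxGL}})\big)^{m}$, which depends on $\mathbf{x}$ only through $r$.

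It then remains to reduce the surface/arc integral to a one-dimensional integral over $r\in[r_0,r_{\mathrm{vis,max}}^{\idxGL}]$. For LEO this is the classical spherical-cap change of variables: with the terminal at distance $\re$ from Earth's center and the interferer on the sphere of radius $\re+\aL$, the law of cosines $r^{2}=\re^{2}+(\re+\aL)^{2}-2\re(\re+\aL)\cos\psi$ and the surface element $(\re+\aL)^{2}\sin\psi\,d\psi\,d\varphi$ combine, after integrating $\varphi$ over $[0,2\pi)$, into the one-dimensional measure $\tfrac{2\pi(\re+\aL)}{\re}\,r\,dr$; this reproduces $\kappa(s;r_0)$ and the factor $2\pi\lL(\re+\aL)/\re$. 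For GEO the satellites lie on a circle of radius $\re+\aG$ in the equatorial plane; parametrizing by the relative azimuth $\varphi$, the law of cosines gives $r^{2}=(\re+\aG)^{2}+\re^{2}-2(\re+\aG)\re\cos\phi\cos\varphi$, hence $\cos\varphi=\big((\re+\aG)^{2}+\re^{2}-r^{2}\big)/\big(2(\re+\aG)\re\cos\phi\big)$ and $\sin\varphi=\sqrt{v_1-(v_2-r^{2})^{2}}/\sqrt{v_1}$; the arc-length element $(\re+\aG)\,d\varphi$ then becomes $2(\re+\aG)\,r\,dr/\sqrt{v_1-(v_2-r^{2})^{2}}$, and the extra factor of $2$ (so that $4\lG(\re+\aG)$ appears in the exponent) comes from the two-sided symmetry of the visible arc about the terminal's meridian, each distance $r\in[r_0,\romaxG]$ being attained by two satellites. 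Collecting the pieces yields both claimed expressions. I expect the GEO reduction to be the main obstacle: one has to track the symmetry factor correctly and verify that the square-root singularity of the Jacobian at $r=\romaxG$ (where $\sin\varphi\to0$) is integrable, whereas the LEO change of variables and the PGFL/fading computations are routine.
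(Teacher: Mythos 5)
Your proposal matches the paper's proof: the paper likewise applies the PPP exponential functional (Campbell's theorem), the Gamma fading Laplace transform $\E[e^{-sh_n}]=(m/(s+m))^m$, and the change of variables via $\frac{d|\AG(r)|}{dr}=\frac{4r(\re+\aG)}{\sqrt{v_1-(v_2-r^2)^2}}$ and $\frac{d|\AL(r)|}{dr}=\frac{2\pi r(\re+\aL)}{\re}$, which you derive explicitly (including the two-sided symmetry factor giving $4\lG(\re+\aG)$). The only quibble is your closing remark: the Jacobian's square-root singularity sits at $r=\rminG$ (where $\sin\varphi\to 0$), not at $r=\romaxG$, and it is integrable there, so nothing in the argument changes.
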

    For more details, see Appendix \ref{App:LTG}.\\
\vspace{-5pt}
We define a satellite as providing coverage when the $\SINR$ at a typical terminal meets or exceeds a specified threshold $\tau$. The coverage probability is subsequently expressed as $\P[\SINR\geq \tau]$. Based on the result from Lemma~\ref{lem:LT}, the coverage probability is formulated in the following theorem.

\begin{thm}\label{thm:Pcov_tot}
 The total coverage probability for hybrid GEO-LEO satellite networks is given by
    \begin{align} \label{eq:Pcov}
    \Pcov
    & =\Pvis^{\G} \Pvis^{\L} (\PasscG \PcovG + \PasscL \PcovL ) \notag\\
    &\quad + \Pvis^{\G} (1-\Pvis^{\L}) \PcovGhat + (1-\Pvis^{\G}) \Pvis^{\L} \PcovLhat
    \end{align}
    In \eqref{eq:Pcov}, $\PcovG$ and $\PcovL$ are the coverage probabilities when the terminal is associated with either a GEO or LEO satellite, respectively, which are given by
    \begin{align}\label{eq:PcovG}
        \PcovG 
        &=\sum_{i=1}^{m} \binom{m}{i}(-1)^{i+1} \int_{\rminG}^{\romaxG} e^{-i  N_0 W \delta_{\G}} \notag\\
        & \times  \mathcal{L}_{I_{\G|\G}}(i \delta_{\G};r) \mathcal{L}_{I_{\L|\G}}(i \delta_{\G};\roLbiased(r)) f_{\RoG}(r)dr
    \end{align}
    and 
    \begin{align}\label{eq:PcovL}
        \PcovL      
        &= \sum_{i=1}^{m} \binom{m}{i}(-1)^{i+1} \int_{\aL}^{\romaxL} e^{-i N_0 W \delta_{\L}}\notag\\
        &\times   \mathcal{L}_{I_{\G|\L}}( i \delta_{\L};\roGbiased(r)) \mathcal{L}_{I_{\L|\L}}(i \delta_{\L};r) f_{\RoL}(r)dr
    \end{align}
    where $\nu = m(m!)^{-1/m}$ and $\delta_{\idxGL} = \nu \tau \coeff^{\idxGL}_0 r^{\alpha_{\idxGL}}$ with $\coeff^{\idxGL}_0 = \frac{16 \pi^2\fc^2}{\Pt^{\idxGL} G^{\idxGL}_0 c^2}$.
    Herein, $\PcovGhat$ and $\PcovLhat$ are the coverage probabilities without interference from the other type of satellites, which can be obtained by removing $\mathcal{L}_{I_{\L|\G}}(i \delta_{\G};\roLbiased(r))$ and $\mathcal{L}_{I_{\G|\L}}( i \delta_{\L};\roGbiased(r))$ in \eqref{eq:PcovG} and \eqref{eq:PcovL}, respectively.
\end{thm}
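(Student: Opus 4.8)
The plan is to decompose the coverage event according to the association cases tabulated in Table~\ref{table:assc_case}, then evaluate each conditional coverage probability using the Nakagami-$m$ CCDF together with the Laplace transforms from Lemma~\ref{lem:LT}. First I would condition on which types of satellites are visible. By the independence of $\BPPG$ and $\BPPL$, the four mutually exclusive events (both visible, only GEO, only LEO, none) have probabilities $\Pvis^{\G}\Pvis^{\L}$, $\Pvis^{\G}(1-\Pvis^{\L})$, $(1-\Pvis^{\G})\Pvis^{\L}$, and $(1-\Pvis^{\G})(1-\Pvis^{\L})$ respectively; in the last case there is no serving satellite, contributing zero. Within the ``both visible'' event, I would further condition on the association outcome $\asscidx\in\{\G,\L\}$, which by Lemma~\ref{lem:Passc} occurs with probabilities $\PasscG$ and $\PasscL$, yielding the first line of \eqref{eq:Pcov}. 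The two ``only one type visible'' events force association to that type but with interference only from that constellation, giving $\PcovGhat$ and $\PcovLhat$; the remaining technical work is to justify that removing the cross-constellation Laplace factor is exactly what ``only one type visible'' means.

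Next I would compute $\PcovG$, the probability $\P[\SINR_{\G}\ge\tau]$ conditioned on association with a GEO satellite at distance $\RoG=r$. Writing the signal power as $\PtG\GoG h_0^{\G}\ell(\mathbf{x}_0^{\G}) = h_0^{\G}/(\nu^{-1}\coeffG_0 r^{\alphaG})$ up to constants, the coverage event becomes $h_0^{\G}\ge \tau\,\coeffG_0 r^{\alphaG}(I_{\G|\G}+I_{\L|\G}+N_0 W)$ (with the appropriate $\nu$ bookkeeping). Since $h_0^{\G}$ is Nakagami-$m$ with unit mean, its CCDF is $\P[h\ge x] = e^{-mx}\sum_{q=0}^{m-1}(mx)^q/q!$; the standard trick is to upper/lower bound or, here, to use the exact identity $\P[h\ge x]\approx 1-(1-e^{-\nu x})^m$ via the binomial expansion $1-(1-e^{-\nu x})^m = \sum_{i=1}^{m}\binom{m}{i}(-1)^{i+1}e^{-i\nu x}$ with $\nu=m(m!)^{-1/m}$. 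Substituting $x = \tau\coeffG_0 r^{\alphaG}(I_{\G|\G}+I_{\L|\G}+N_0W)$ and taking the expectation over the interference (which is independent of $h_0^{\G}$), the deterministic term $e^{-i\nu\tau\coeffG_0 r^{\alphaG}N_0W} = e^{-iN_0W\delta_{\G}}$ pulls out, and the expectations over $I_{\G|\G}$ and $I_{\L|\G}$ factor — again by independence of the two PPPs — into $\mathcal{L}_{I_{\G|\G}}(i\delta_{\G};r)\,\mathcal{L}_{I_{\L|\G}}(i\delta_{\G};d^{\L}(r))$, where the second argument $d^{\L}(r)$ is the biased minimum interferer distance established in Lemma~\ref{lem:Passc} (conditioned on GEO association, no LEO satellite is closer than $d^{\L}(r)$). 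Finally I would deconidition over $r$ against $f_{\RoG}(r)$ to obtain \eqref{eq:PcovG}; the derivation of $\PcovL$ is symmetric with GEO and LEO roles swapped.

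The main obstacle is handling the conditioning of the interference distributions correctly when the serving satellite is of the \emph{other} type. When the terminal associates with a GEO satellite, the nearest LEO satellite is not merely at distance $\ge\aL$; the association inequality $\PrG>\PrL$ translates into $\RoL > d^{\L}(r)$, so the LEO interference field is a PPP on $\AL$ conditioned to have no point within distance $d^{\L}(r)$ of the terminal. I need to argue that conditioning on this empty-ball event does not change the \emph{reduced Palm} intensity of the remaining LEO points outside that ball — which holds for a PPP by the independence property (Slivnyak's theorem / complete independence over disjoint regions) — so that $\mathbb{E}[e^{-sI_{\L|\G}}\mid \RoG=r,\text{GEO assoc}] = \mathcal{L}_{I_{\L|\G}}(s;d^{\L}(r))$, i.e. the Laplace transform of interference from a PPP with inner cutoff radius $d^{\L}(r)$, which is exactly the $\kappa(s;r_0)$ integral from Lemma~\ref{lem:LT} with $r_0 = d^{\L}(r)$. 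A secondary subtlety is the Nakagami CCDF identity: the expansion $1-(1-e^{-\nu x})^m$ with $\nu = m(m!)^{-1/m}$ is the well-known tight approximation (exact for $m=1$) used in the stochastic-geometry literature, and I would state explicitly that \eqref{eq:PcovG}--\eqref{eq:PcovL} inherit this approximation rather than claiming exact equality. Modulo these two points, the remaining steps — the binomial expansion, pulling out the noise term, factoring over independent PPPs, and integrating against the distance PDFs — are routine.
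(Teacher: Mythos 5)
Your proposal is correct and follows essentially the same route as the paper: a total-probability decomposition over the visibility/association cases, the Alzer-type binomial approximation of the Nakagami-$m$ CCDF with $\nu=m(m!)^{-1/m}$, factorization of the interference expectation into the two Laplace transforms by independence of the GEO and LEO PPPs, and the biased cutoff $d^{\L}(r)$ (resp. $d^{\G}(r)$) for the cross-constellation interferers before deconditioning against $f_{R_0^{\sigma}}$. The two subtleties you flag — that the empty-ball conditioning from the association event leaves the PPP outside the ball unchanged, and that the CCDF expansion is an approximation rather than an identity — are handled implicitly (resp. acknowledged) in the paper's proof, so they refine rather than alter the argument.
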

For more details, see Appendix \ref{App:Pcov_tot}.


\section{Simulation results}\label{sec:sim_res}





In this section, we numerically verify the derived results with simulations. 
We let $N_\idxGL$ denote the average number of satellites, which determines satellite densities, i.e.,
$\{\lambda_\G, \lambda_\L\}=\left\{\frac{\NG}{2\pi(\re+\aG)}, \frac{\NL}{4\pi(\re+\aL)^2}\right\}$. Unless otherwise stated, we set $\re=6378$ km, $c=3\times10^5$ km/s, $\{\aL, \aG\}=\{600, 35786\}$ km, $\{\NL,\NG\}=\{100,1000\}$, $\{\phi,\theta\}=\{0,0\}$ deg, $\Bhat^{\G}=1$, and $N_0=-174$ dBm/Hz. 
We use the satellite and handheld terminal parameters in the Ka-band scenario of the 3GPP NTN standard \cite{TR38.821}: $\fc=20$ GHz, $W=30$ MHz, the maximum effective isotropically radiated power (EIRP) density for GEO satellites $=40$ dBW/MHz, and that for LEO satellites $=4$ dBW/MHz. From these EIRP densities, the transmit power of the satellites can be obtained using the fact that the EIRP density is equal to
$\Pt^\idxGL \Go^\idxGL/W$.
The analytical results of the association probabilities are from Lemma \ref{lem:Passc}, and those of the coverage probability are derived from Theorem 1.
The simulation results are obtained through Monte Carlo simulations with 100k independent realizations.

Fig. \ref{Fig:Passc} verifies that the analytical results of the association probabilities in Lemma \ref{lem:Passc} are in strong accordance with the simulation results. The association probability for a GEO satellite decreases with $|\phi|$ and then vanishes at $\phi\approx \pm 81.3$ deg. This is because the GEO satellites, distributed on the equatorial plane, hardly provide their coverage to the areas close to the polar regions. Thus, in order to achieve global coverage, it is essential to deploy LEO satellites in polar orbits, e.g., Walker Star constellations. The association probability with GEO satellites decreases as $\alphahat^{\G}$ increases because the severe path loss of GEO satellite links discourages associating with GEO satellites. Moreover, the association probability with GEO satellites increases with $\Bhat^{\G}$, implying that biasing factors can be effectively used to offload data between GEO and LEO satellite networks. When a limited number of LEO satellites are deployed, i.e., for a small $\lambda_\L$, GEO satellites are more likely to serve terminals, and vice versa.

Fig.~\ref{Fig:Pcov_vs_tau} shows the coverage probability versus the $\SINR$ threshold $\tau$. We compare the performance of hybrid GEO-LEO networks to that of satellite networks where only GEO or LEO satellites exist. 
As expected, the coverage probabilities decrease with the $\SINR$ threshold due to the definition $\Pcov=\P[\SINR \ge \tau]$. The analytical results obtained from Theorem 1 exhibit performance that aligns well with the simulation results for various Nakagami parameters $\alpha=\{1,2\}$. It is shown that  the cooperation between GEO and LEO satellites extends the coverage of overall satellite networks, while the load on LEO satellite networks can be offloaded to GEO satellites at the expense of reduced coverage.

\section{Conclusions}\label{sec:conclusions}
In this paper, we considered a hybrid GEO-LEO satellite network, with GEO satellites distributed along the equatorial plane and LEO satellites distributed on the sphere. We first analyzed satellite-visible probabilities, distance distributions, and association probabilities. Then, we derived an analytical expression for the network's coverage probability. Using Monte Carlo simulations, we validated the analytical results and demonstrated how system parameters affected coverage performance. The analytical results can provide an effective estimate of coverage performance in scenarios where GEO and LEO satellites cooperate or share the same frequency bands.
We believe this study can be readily extended to scenarios with deterministic GEO satellites or restricted arcs of the geostationary orbit.

\appendices

\section{Proof of Lemma \ref{lem:LT}}\label{App:LTG}
Let $\mathcal{X}_{\idxGL}$ denote the region where interfering satellites can be located given $R_0^{\idxGL}=r_0$, i.e., $\mathcal{X}_{\idxGL}=\Asvis\cap\As(r_0)^{\mathrm{c}}$. 
Then, the Laplace transform of the aggregated GEO interference given $\RoG=r_0$ is derived as 
\begin{align}\label{eq:LI_approx_proof}
    &\mathcal{L}_{I_{\G|\idxGL}}(s;r_0)
        = \E\left[e^{-s I_{\G|\idxGL}}|\RoG = r_0 \right]\nonumber\\
        &= \E_{\BPPG,\{h_n^{\G}\}}\left[\exp\left(-s\sum_{n=1}^{{\BPPG}(\mathcal{X}_{\G})} \PtG \bar{G}^{\G}  h_n^{\G} \ell(\mathbf{x}_n^{\G})\right)\right]\nonumber\\
        &\mathop=\limits^{(a)} \exp \biggl(-\lG \!\!\int_{\mathbf{x} \in \mathcal{X}_{\G}} \!\!\left( 1\!-\!\E_{h_n^{\G}}\!\!\left[e^{-s \Pt \bar{G}^{\G} h_n^{\G} \ell(\mathbf{x})}\right] \right) d\mathbf{x} \bigg) \nonumber\\
        &\mathop=\limits^{(b)} \exp \Biggl(-\lG \!\!\int_{\mathbf{x} \in \mathcal{X}_{\G}} \left( 1- \left(\frac{m \coeffG r^{\alpha}}{s + m \coeffG r^{\alpha}}\right)^{m} \right) d\mathbf{x} \Biggr)
    \end{align}
    where ($a$) follows from Campbell’s theorem, and ($b$) follows from the Laplace transform $\E[e^{-s h_n}]=\mathcal{L}_{h_n}(s)=\left(\frac{m}{s+m}\right)^m$. Using the fact that $\frac{d|\AG(r)|}{dr}
        =\frac{4r(\re+\aG)}{\sqrt{v_1-\left(v_2-r^2\right)^2}}$, 
     we can obtain \eqref{eq:LTG}. Similarly, the Laplace transform of the aggregated LEO interference given $\RoL=r_0$, $\mathcal{L}_{I_{\L|\idxGL}}(s;r_0)$, can be derived 
     using the fact that $\frac{d|\mathcal{\AL}(r)|}{dr}=\frac{2\pi r(\re+\aL)}{\re}$. The details of the derivation are omitted due to space limitations.

\section{Proof of Theorem \ref{thm:Pcov_tot}}\label{App:Pcov_tot}
There are four visibility cases, as described in Table \ref{table:assc_case}. The probability that both GEO and LEO satellites are visible is given by $\Pvis^{\G}\Pvis^{\L}$, and the coverage probability for this case is derived as $\PasscG \PcovG + \PasscL \PcovL$ where $\Pcov^{\asscidx} = \P[\SINR_{\asscidx} \geq \tau]$. 
Similarly, the probabilities that only GEO or LEO satellites are visible are given by $\Pvis^{\G}(1-\Pvis^{\L})$ and $(1-\Pvis^{\G})\Pvis^{\L}$, respectively, and their corresponding coverage probabilities are $\bar{\mathcal{P}}_{\text{cov}}^{\G}$ and $\bar{\mathcal{P}}_{\text{cov}}^{\L}$, respectively.
Thus, using the total probability theorem, the total coverage probability can be expressed as~\eqref{eq:Pcov}.
    The coverage probability $\PcovG$ can be derived as
    \begin{align}\label{eq:PcovG-1}
    &\PcovG =\E_{\RoG}\left[ \P\left[\frac{(1/\coeffG_0) h_0^{\G} r^{-\alphaG}}{\IGG + \ILG + N_0 W} \geq \tau \,\bigg|\,\asscidx=\G, \RoG=r \right]\right]\nonumber\\
    &= \int_{\rminG}^{\romaxG} \P\left[h_0^{\G} \geq  (\IGG + \ILG + N_0 W) \tau \coeffG_0 r^{\alphaG}  \right]f_{\RoG}(r)dr
    \end{align}
    To further simplify \eqref{eq:PcovG-1}, we use the approximated CDF of the channel gain $F_{h_n}(x) \approx 1-\sum_{i=1}^{m}\binom{m}{i}(-1)^{i+1}e^{-\nu i x}$ \cite{SG17mmWave}. Then, the probability in \eqref{eq:PcovG-1} becomes    
    \begin{align}\label{eq:PcovG-1-prob}
         &\E_{\IGG, \ILG}\left[\sum_{i=1}^{m}\binom{m}{i}(-1)^{i+1}e^{- i  (\IGG + \ILG + N_0 W) \delta_{\G}} \right]\notag\\
         & \mathop = \limits^{(a)} \sum_{i=1}^{m} \binom{m}{i}(-1)^{i+1} e^{-i  N_0 W \delta_{\G}} \mathcal{L}_{\IGG}(i \delta_{\G};r) \mathcal{L}_{\ILG}(i \delta_{\G};\roLbiased(r))
    \end{align}
    where ($a$) follows from the independence between $\IGG$ and $\ILG$ and the definition of the Laplace transform. 
    From \eqref{eq:PcovG-1} and \eqref{eq:PcovG-1-prob}, the derivation of $\PcovG$ is complete.
    The coverage probability without interference from LEO satellites, i.e., $\PcovGhat$, can be readily obtained by removing $\mathcal{L}_{\ILG}$ from $\PcovG$.    
    Similarly, $\PcovL$ and $\PcovLhat$ can be obtained but the derivation is omitted due to space limitation.

\ifCLASSOPTIONcaptionsoff
  \newpage
\fi

\bibliographystyle{IEEEtran}
\bibliography{
    references/3GPP,
    references/books,
    references/chSR,
    references/IEEEabrv, 
    references/myPapers,
    references/refs,
    references/SG,
    references/hybrid_GEO+LEO
    }

\end{document}